\newtheorem{Definition}{Definition}[section]
\newtheorem{Theorem}{Theorem}[section]
\newtheorem{Remark}{Remark}[section]
\newtheorem{Lemma}{Lemma}[section]
\numberwithin{equation}{section}
\renewcommand\@biblabel[1]{#1.}
\begin{document}
\title{{\LARGE \textbf{The high order block RIP condition for signal recovery  }}}
\author{Chen Wengu$^{1}$ ,\ \  Li Yaling$^{2,}$\thanks{Corresponding author}\\[5pt]
$^{1}$ Institute of Applied Physics and Computational Mathematics\\ Beijing, 100088, China\\
$^{2}$ Graduate School, China Academy of Engineering Physics\\ Beijing, 100088, China\\[5pt]
Email: chenwg@iapcm.ac.cn, leeyaling@126.com} \maketitle

\begin{bfseries}
Abstract
\end{bfseries}
In this paper, we consider the recovery of block sparse signals,
 whose nonzero entries appear in blocks (or clusters) rather than spread arbitrarily throughout the signal, from incomplete linear measurement.
A high order sufficient condition based on block RIP is obtained to guarantee the stable recovery of all block sparse signals in the presence of noise, and robust recovery when signals are not exactly block sparse via mixed $l_{2}/l_{1}$ minimization. Moreover, a concrete example is established to
ensure the condition is sharp. The significance of the results presented
in this paper lies in the fact that recovery may be possible under more general conditions
 by exploiting the block structure of the sparsity pattern
  instead of the conventional sparsity pattern.

\begin{bfseries}
Keywords
\end{bfseries}
block sparsity $\cdot$  block restricted isometry property $\cdot$  compressed sensing $\cdot$  mixed $l_{2}/l_{1}$ minimization

\begin{bfseries}
Mathematics Subject Classification (2010)
\end{bfseries}
90C59 %Approximation methods and heuristics
$\cdot$
94A12   %Signal theory (characterization, reconstruction, filtering, etc.
$\cdot$
94A20   %Sampling theory

%\begin{bfseries}
%Mathematics Subject Classification (2010)
%\end{bfseries}
%94A12, 90C59, 94A15
\section{Introduction}\label{intro}
\quad\quad  Compressed sensing (CS), a new type of sampling theory, is a fast growing field of research. It has attracted considerable interest in a number of fields including
applied mathematics, statistics, seismology, signal processing and electrical engineering. Interesting applications include radar system \cite{HS,ZZZ}, coding theory
\cite{AT,CT}, DNA microarrays\cite{PVMH},
color imaging \cite{MW}, magnetic resonance imaging \cite{LDSP}.
Up to now, there are already many works on CS \cite{BDDW,BW,CDD,CDD1,CDD2,T,T1,LT,LT1,R, RRT,TG}.
The key problem in CS is to recover an unknown high-dimensional sparse signal $x\in \mathbb{R}^{N}$ using an efficient algorithm through a sensing matrix $A\in \mathbb{R}^{n\times N}$
and
the following linear measurement
\begin{align}\label{m1}
y=Ax+z
\end{align}
where observed signal $y\in \mathbb{R}^{n}$,
$n\ll N$ and $z \in \mathbb{R}^{n}$ is a vector of measurement errors.
In general, the solutions to the underdetermined system of linear equations (\ref{m1})
are not unique. But now suppose that $x$ is known to be sparse
in the sense that it contains only a small number of nonzero entries, which can occur in anywhere in $x$.
This premise fundamentally changes the problem such that there is a unique sparse solution under regularity conditions.
It is well known the $l_{1}$ minimization approach, a widely used algorithm,
is an effective way to recover
sparse signals in many setting. One of the most widely used frameworks
to depict recovery ability of $l_{1}$ minimization
in CS is the restricted isometry property (RIP) introduced by Cand\`es and Tao \cite{CT}.
Let $A\in \mathbb{R}^{n\times N}$ be a matrix and $1\leq k \leq N$
is an integer, the restricted isometry constant (RIC) $\delta_{k}$
of order $k$ is defined as the smallest nonnegative
 constant that
satisfies
$$(1-\delta_{k})\|x\|_{2}^{2}\leq\|Ax\|_{2}^{2}\leq(1+\delta_{k})\|x\|_{2}^{2},$$
for all $k-$sparse vectors $x\in\mathbb{R}^{N}.$ A vector
$x\in \mathbb{R}^{N}$ is $k-$sparse if $|\mathrm{supp}(x)|\leq k$, where
$\mathrm{supp}(x)=\{i: x_{i}\neq 0\}$ is the support of $x$. When $k$ is not an integer, we define $\delta_{k}$ as $\delta_{\lceil k \rceil}$.
It has been shown $l_{1}$ minimization can recover a sparse signal with
a small or zero error under some appropriate RIC met by the measurement matrix $A$ \cite{CZ2,CWX1,CXZ,CWX,CZ1,ML,CZ,CRT,C1,F,F1}.
As far as we know,  a sharp sufficient condition based on RIP for exact and stable recovery of signals
in both noiseless and noisy cases by $l_{1}$ minimization was established by Cai and Zhang \cite{CZ}.

However, in practical examples, there are signals which have a particular sparsity pattern, where the nonzero coefficients appear in some blocks (or clusters). Such signals are referred to as block sparse \cite{EM,EKB,WWX}.
In practice, the block sparse structure is very common, such as reconstruction of multi-band signals \cite{ME}, equalization of sparse communication channels \cite{CR} and multiple measurement vector (MMV) model \cite{EM,ME1,ER}.
Actually, the notion of block sparsity was already introduced in statistics literature and was named the group Lasso estimator
\cite{YL,CH,NR,B,MGB,HZ}. Recently, block sparsity pattern has attracted significant attention. Various efficient methods
and explicit recovery guarantees \cite{LL,FLZZ,LWB,WBWL,EV,WZLLT,WWX,EKB,EM} have been proposed.

In this paper, our goal is to  recover the unknown signal $x$ from linear measurement \eqref{m1}. But at the moment,  nonzero elements of signal $x$
are occurring in blocks (or clusters) instead of spreading arbitrarily throughout the signal vector. To this end, firstly, we need the concept of block sparsity. In order to emphasize the block structure, similar to \cite{WWX,EM}, we view $x$ as a concatenation of blocks over $\mathcal{I}=\{d_{1},d_{2},\ldots, d_{M}\}$. Then $x$ can be expressed as
$$x=(\underbrace{x_{1},\ldots,x_{d_{1}}}_{x[1]}, \underbrace{x_{d_{1}+1},\ldots,x_{d_{1}+d_{2}}}_{x[2]},
\ldots,\underbrace{x_{N-d_{M}+1},\ldots, x_{N}}_{x[M]})^{T} \in \mathbb{R}^{N},$$
where $x[i]$ denotes the $i$th block of $x$ with the length $d_{i}$ and $N=\sum_{i=1}^{M}d_{i}$. A vector $x\in\mathbb{R}^{N}$ is called block $k-$sparse over $\mathcal{I}=\{d_{1},d_{2},\ldots, d_{M}\}$
if the number of nonzero vectors $x[i]$ is at most $k$ for $i\in \{1,2,\ldots,M\}$. Define
$$\|x\|_{2,0}=\sum_{i=1}^{M}I(\|x[i]\|_{2}>0),$$
where $I(\cdot)$ is an indicator function that it equals to 1 if
its argument is larger than zero and $0$ elsewhere. Then the block $k-$sparse vector over $\mathcal{I}=\{d_{1},d_{2},\ldots, d_{M}\}$
can be cast as $\|x\|_{2,0}\leq k$.
If $d_{i}=1$ for all $i$,
block sparsity is just the conventional sparsity.
Next, one of the efficient methods to recover block sparse signals
is mixed $l_{2}/l_{1}$ minimization
\begin{align}\label{g11}
 \min_{x} \|x\|_{2,1}, \quad \|y-Ax\|_{2}\leq \varepsilon,
\end{align}
where $\|x\|_{2,1}=\sum_{i=1}^{M}\|x[i]\|_{2}$.
Moreover, mixed norm
$\|x\|_{2,2}=(\sum_{i=1}^{M}\|x[i]\|_{2}^{2})^{1/2}$ and
$\|x\|_{2,\infty}=\max_{i}\|x[i]\|_{2}$. Note that $\|x\|_{2,2}=\|x\|_{2}$.
It is easy to know the mixed norm minimization is a generalization of conventional norm minimization.
To ensure uniqueness and stability of solution for the system \eqref{m1}
via mixed $l_{2}/\l_{1}$ minimization,
Eldar and Mishali \cite{EM} generalized the notion of standard restricted isometry property to
block sparse vectors, and obtained the following concept of
block restricted isometry property (block RIP).
%Analogous to,similar as

\begin{Definition}[block RIP]
Let $A\in \mathbb{R}^{n\times N}$ be a matrix,
then $A$ has the $k$ order block restricted isometry property over $\mathcal{I}=\{d_{1},d_{2},\ldots, d_{M}\}$ with nonnegative parameter $\delta_{k|\mathcal{I}}$ if
$$(1-\delta_{k|\mathcal{I} })\|x\|_{2}^{2}\leq\|Ax\|_{2}^{2}\leq(1+\delta_{k|\mathcal{I}})\|x\|_{2}^{2}$$
holds for all block $k-$sparse vector $x\in\mathbb{R}^{N}$ over $\mathcal{I}$.
The smallest constant $\delta_{k|\mathcal{I}}$ is called block restricted isometry
constant (block RIC).
When $k$ is not an integer, we define $\delta_{k|\mathcal{I} }$ as $\delta_{\lceil k\rceil|\mathcal{I}}$.
\end{Definition}
For simplicity, we use $\delta_{k}$ for the block RIP
 constant $\delta_{k|\mathcal{I}}$ in the remainder of this paper.
The block RIP plays a role similar to
standard RIP. The block RIP provides recovery guarantee for block sparse signals. For example, Eldar and Mishali \cite{EM} proved that if matrix $A$ satisfies block restricted isometry constant (block RIC) $\delta_{2k}<\sqrt{2}-1$,   the mixed $l_{2}/\l_{1}$ minimization
can recover exactly the block $k-$sparse signals in noiseless case,
and can approximate the best block $k-$sparse solution
in the presence of noise and mismodeling errors.
Furthermore, they illustrated the advantage of block RIP over standard
RIP. That is, the probability to satisfy the standard RIP
is less than that of satisfying the block RIP.
Meanwhile, a specific example is given to account for
the advantage. They also experimentally demonstrated
the advantage of their algorithm (mixed $l_{2}/l_{1}$ minimization)
over standard basis pursuit. This explained the performance advantage of
block sparse recovery over standard sparse recovery.
Later, Lin and Li \cite{LL} improved the bound of block RIC
to $\delta_{2k}<0.4931$, and also gave another one order
sufficient condition of recovery based on block RIC $\delta_{k}<0.307$.
 %Recently, Wang \cite{WWX} generalized the notion of standard $p-$RIP $(0<p\leq 1)$ to the case of block sparse signals, and given a sufficient condition based on block $p-$RIC to ensure exact and stable recovery of nearly block $k-$sparse signal
%in noiseless and noise case through mixed $l_{2}/l_{p}(0<p\leq1)$ minimization.
%They also received a lower bound on the necessary number of Gaussian measurements such that the block $p-$RIP condition holds with high probability.
%This explains that the mixed $l_{2}/l_{p}(0<p<1)$ minimization recovers block sparse signals from fewer linear measurement than does mixed $l_{2}/l_{1}$ minimization.
There are many other recovery guarantees and efficient recovery methods
to  ensure the recovery of signals with special structure. For example, block coherence \cite{EKB}, strong group sparsity \cite{HZ}, $l_{2}/l_{p}(0<p<1)$ minimization \cite{WWX}, BOMP\cite{FLZZ,WZLLT}.

In this paper, we investigate the high order block RIP conditions for the exact or stable recovery of signals with blocks structure from \eqref{m1} via solving mixed $l_{2}/l_{1}$ minimization in noiseless and noise case.
 Using ideas similar to \cite{CZ},
we establish a sufficient condition on $\delta_{tk}$ to ensure the
stable or exact recovery of signals with nonzero entries occurring
in blocks (or clusters) rather than being arbitrarily spread
throughout the signal vector. The key is to generalize the technique
of sparse representation of a polytope \cite{CZ} to the block
setting. We show that block RIC $\delta_{tk}<\sqrt{\frac{t-1}{t}}$
for any $t>1$ can ensure exact and stable recovery for all block
sparse signals and robust recovery for nearly block sparse signals
via mixed $l_{2}/l_{1}$ minimization. Moreover, it is sharp when
$t\geq 4/3$. A concrete example is given to  illustrate the
optimality. Actually, our results are a generalization of that of
Cai and Zhang \cite{CZ} in the block setting. When $d_{i}=1$ for
$i\in\{1,\ldots,M\}$, our results return to those of Cai and Zhang
\cite{CZ}. The significance of our results lies in the fact that
taking advantage of explicit block sparsity has better
reconstruction performance than viewing the signals as being
standard sparsity, accordingly ignoring the additional structure in
the problem.

The rest of the paper is organized as follows. In Section \ref{2},
we will introduce some notations and establish some basic lemmas that will be used.
The main results and their proofs are given in Section \ref{3}. Finally,
we summarize this paper in Section \ref{4}.

\section{Preliminaries}\label{2}
Throughout this paper, we adopt the following notations unless otherwise stated.
For any $x\in\mathbb{R}^{N}$,
we model it over $\mathcal{I}=\{d_{1},d_{2},\ldots, d_{M}\}$. $
x[i]$ denotes the $i$th block
of $x$. Let $\textbf{0}$ be the zero vector
 whose dimension may be different.
$\Gamma\subset\{1, 2, \ldots, M\}$ indicates block indices, $\Gamma^{c}$ is the complement of $\Gamma$ in $\{1, 2, \ldots, M\}$.  $x[\Gamma]\in \mathbb{R}^{N}$ denotes a vector which equals to $x$ on
block indices $\Gamma$ and $0$ otherwise.
For example, if $\Gamma=\{1, 3, M\}$, then
$x[\Gamma]=(x[1], \textbf{0}, x[3], \textbf{0},\ldots,\textbf{0},
x[M])^{T} \in \mathbb{R}^{N},$
and $x[\Gamma][i]$ denotes
$i$th block
of $x[\Gamma]$.
We denote by  $\mathrm{supp}[x]=\{i: \|x[i]\|_{2}\neq 0\}$ the block support of $x$, and $\mathcal{I}_{0}$ the block indices of the $k$ largest block in $l_{2}$ norm of $x$, i.e., $\|x[i]\|_{2}\geq \|x[j]\|_{2}$ for any $i\in \mathcal{I}_{0}$ and $j\in \mathcal{I}_{0}^{c}$.
We also denote $x[max(k)]$ as $x$ with all but the largest
$k$ blocks in $l_{2}$ norm set to zero.
From now on, we always take that $h=\widehat{x}-x$, where $\widehat{x}$ is the minimizer of $l_{2}/l_{1}$ minimization problem \eqref{g11} and $x$ is the original signal.

The following lemma provides a key technical tool for the proof of our main result. It is an extension of Lemma 1.1 introduced by  Cai and Zhang \cite{CZ}.
We extend sparse representation of a polytope to the block setting.
\begin{Lemma}\label{l1}
For a positive number $\alpha$ and a positive integer $k$,
define the block polytope $T(\alpha, k)\subset\mathbb{R}^{N}$ by
$$T(\alpha, k)=\{v\in\mathbb{R}^{N}: \|v\|_{2,\infty}\leq \alpha, \|v\|_{2,1}\leq k\alpha\}.$$
For any $v\in \mathbb{R}^{N}$,
define the set of block sparse vectors $U(\alpha, k, v)\subset\mathbb{R}^{N}$ by
$$U(\alpha, k, v)=\{u\in\mathbb{R}^{N}: \mathrm{supp}(u)\subseteq \mathrm{supp}(v), \|u\|_{2,0}\leq k, \|u\|_{2,1}=\|v\|_{2,1}, \|u\|_{2,\infty}\leq\alpha\}.$$
Then any $v\in T(\alpha, k)$ can be expressed as
$$v=\sum\limits_{i=1}^{J}\lambda_{i}u_{i},$$
where $u_{i}\in U(\alpha, k, v)$ and $0\leq \lambda_{i}\leq 1, \sum\limits_{i=1}^{J}\lambda_{i}=1.$
\end{Lemma}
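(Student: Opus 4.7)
The strategy is to reduce the block statement to the scalar sparse-representation-of-a-polytope lemma of Cai and Zhang \cite{CZ} by passing to the $\mathbb{R}^{M}$ vector of block norms, and then to lift the resulting scalar convex decomposition back to $\mathbb{R}^{N}$ by reattaching the block directions inherited from $v$. This reduction is the only real idea; everything else is bookkeeping.

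First I would set $a\in\mathbb{R}^{M}$ with $a_j=\|v[j]\|_2$. The hypotheses $\|v\|_{2,\infty}\le\alpha$ and $\|v\|_{2,1}\le k\alpha$ translate verbatim into $\|a\|_{\infty}\le\alpha$ and $\|a\|_{1}\le k\alpha$, so $a$ lies in the classical (scalar) Cai--Zhang polytope in $\mathbb{R}^{M}$. Their Lemma 1.1 then supplies $a=\sum_{i=1}^{J}\lambda_i b_i$ with $\lambda_i\ge 0$, $\sum_i\lambda_i=1$, and each $b_i\in\mathbb{R}^{M}$ at most $k$-sparse, with $\mathrm{supp}(b_i)\subseteq\mathrm{supp}(a)$, $\|b_i\|_\infty\le\alpha$, and $\|b_i\|_1=\|a\|_1$. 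I then lift each $b_i$ to a vector $u_i\in\mathbb{R}^{N}$ block by block by
\[
u_i[j]=\begin{cases} b_i(j)\,v[j]/\|v[j]\|_2, & v[j]\ne\mathbf{0},\\ \mathbf{0}, & v[j]=\mathbf{0},\end{cases}
\]
which is well defined because $b_i(j)=0$ whenever $a_j=0$, i.e.\ whenever $v[j]=\mathbf{0}$.

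To finish, I would verify the reconstruction identity and the four required properties of $u_i$ using the single observation $\|u_i[j]\|_2=|b_i(j)|$. Block by block, $\sum_i\lambda_i u_i[j]=a_j\,v[j]/\|v[j]\|_2=v[j]$ on the block support of $v$, while both sides vanish off the block support, so $v=\sum_i\lambda_i u_i$. The block support of $u_i$ equals $\mathrm{supp}(b_i)$, which gives $\mathrm{supp}(u_i)\subseteq\mathrm{supp}(v)$ and $\|u_i\|_{2,0}\le k$; similarly $\|u_i\|_{2,1}=\|b_i\|_1=\|a\|_1=\|v\|_{2,1}$ and $\|u_i\|_{2,\infty}=\|b_i\|_\infty\le\alpha$.

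The main obstacle is conceptual rather than technical: one must recognize that, once the block directions $v[j]/\|v[j]\|_2$ are frozen, the mixed-norm membership in $T(\alpha,k)$ depends only on the scalar magnitudes $\|v[j]\|_2$, so the problem collapses to a pure $\ell_\infty/\ell_1$ polytope decomposition in $\mathbb{R}^{M}$ to which the existing scalar result applies directly. Once this reduction is identified, no new extremal or combinatorial arguments are needed beyond those already contained in the scalar Cai--Zhang lemma.
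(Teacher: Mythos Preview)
Your argument is correct, and it is genuinely different from the paper's. The paper does not reduce to the scalar Cai--Zhang lemma; instead it reruns the entire inductive construction of \cite{CZ} in the block setting: write $v=\sum_{i=1}^{s}c_iE_i$ with $c_i=\|v[i]\|_2$ and $E_i$ the normalized block directions, locate the index $l$ via the set $\Omega=\{l:\sum_{i\ge l}c_i\le (s-l)\alpha\}$, build the auxiliary weights $b_j$ and the block $(s-1)$-sparse vectors $v_j$, and induct down on $s$. In effect the authors copy the scalar proof and insert the word ``block'' throughout. Your route--pass to the magnitude vector $a=(\|v[j]\|_2)_j\in\mathbb{R}^M$, invoke the scalar lemma as a black box, and lift back along the frozen directions $v[j]/\|v[j]\|_2$--is shorter, cleaner, and explains \emph{why} the block result costs nothing extra: the polytope membership depends only on the block magnitudes. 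The only advantage of the paper's approach is that it is self-contained, whereas yours relies on the scalar result from \cite{CZ}; since the paper cites that reference anyway, your reduction is arguably the more natural presentation.
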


\begin{proof}
%Our proof is rooted in that of Lemma 1.1 in \cite{CZ}. To adapt the results to the %mixed norms, however, some essential modifications are necessary.
First of all, what we have to prove is that $v\in T(\alpha, k)$ is in the convex hull of $U(\alpha, k, v)$. To show the statement, we proceed by induction.

Suppose $v\in T(\alpha, k)$.
If $v$ is block $k-$sparse, $v$ itself is in $U(\alpha, k, v)$.
Thus, assume that the assertion is true for all block $(s-1)-$sparse vectors $v$ $(s-1\geq k)$, then we show that the assertion is also true for any block $s-$sparse vectors $v$. For any block $s-$sparse vectors $v \in T(\alpha, k)$, (without loss of generality, suppose that $v$ is not block $(s-1)-$sparse, otherwise the result holds by assumption of block $(s-1)-$sparse), we have $\|v\|_{2,\infty}\leq \alpha, \|v\|_{2,1}\leq k\alpha$. Furthermore, $v$ can be expressed as  $v=\sum\limits_{i=1}^{s}c_{i}E_{i}$ with $c_{1}\geq c_{2}\cdots \geq c_{s}>0$, where $c_{1}$ equals to the largest $\|v[i]\|_{2}$ for every $i \in \{1,2,\ldots, M\}$, $c_{2}$ equals to the next largest $\|v[i]\|_{2}$, and so on, where $E_{i}$ is a unit vector in $\mathbb{R}^{N}$, which equals to $v/c_{i}$ on the $i$th largest block of $v$ and zero elsewhere.
Let $$\Omega=\{1\leq l\leq s-1: c_{l}+c_{l+1}+\cdots+c_{s}\leq(s-l)\alpha\}.$$
Owing to $\sum\limits_{i=1}^{s}c_{i}=\|v\|_{2,1}\leq k\alpha$, $\Omega$ is not empty for $1\in \Omega$. We denote by $l$ the largest element in $\Omega$. It is easy to get
\begin{align}\label{g1}
  c_{l}&+c_{l+1}+\cdots+c_{s}\leq(s-l)\alpha,\notag \\
  c_{l+1}&+c_{l+2}+\cdots+c_{s}>(s-l-1)\alpha.
\end{align}
It is worthy of noting that \eqref{g1} also holds when the largest element in $\Omega$ is $s-1$.
Take \begin{align}\label{g2}
      b_{j}=\frac{\sum_{i=l}^{s}c_{i}}{s-l}-c_{j}, \quad l\leq j\leq s.
     \end{align}
By direct calculations, we have  $(s-l)\sum_{i=l}^{s}b_{i}=\sum_{i=l}^{s}c_{i}$ and $b_{j}\geq b_{l}$ for all $l\leq j\leq s$. Moreover, for any $l\leq j\leq s$,
\begin{align*}
 b_{j}&\geq b_{l}=\frac{\sum_{i=l+1}^{s}c_{i}-(s-l-1)c_{l}}{s-l}\\
 &\geq \frac{\sum_{i=l+1}^{s}c_{i}-(s-l-1)\alpha}{s-l}>0,                           \end{align*}
 where the second inequality follows from the fact that $\|v\|_{2,\infty}\leq \alpha$, the last inequality is a result of the second inequality in \eqref{g1}.
 Next, define
 \begin{align*}
   \lambda_{j}  =\frac{b_{j}}{\sum_{i=l}^{s}b_{i}},  \quad  v_{j} =\sum_{i=1}^{l-1}c_{i}E_{i}+(\sum_{i=l}^{s}b_{i})\sum_{i=l,i\neq j}^{s}E_{i} \in \mathbb{R}^{N},\\
   \end{align*}
 for every $l\leq j\leq s$, we have  $v=\sum_{j=l}^{s}\lambda_{j}v_{j}$, $0<\lambda_{j}\leq 1$, $\sum_{j=l}^{s}\lambda_{j}=1$, $\mathrm{supp} (v_{j})\subseteq \mathrm{supp} (v)$. In addition, using the fact that
 $(s-l)\sum_{i=l}^{s}b_{i}=\sum_{i=l}^{s}c_{i}$, $\|v\|_{2,\infty}\leq \alpha$
 and the first inequality in \eqref{g1},
 \begin{align*}
   \|v_{j}\|_{2,1}&= \sum_{i=1}^{l-1}c_{i} + (s - l)\sum_{i=l}^{s}b_{i}
   =\sum_{i=1}^{l-1}c_{i} + \sum_{i=l}^{s}c_{i}=\|v\|_{2,1},\\
  \|v_{j}\|_{2,\infty}&=\max\{c_{1},\ldots,c_{l-1}, \sum_{i=l}^{s}b_{i}\}\leq \max \{\alpha, \frac{\sum_{i=l}^{s}c_{i}}{s-l}\}\leq \alpha.
 \end{align*}
 Finally, since $v_{j}$ is block $(s-1)-$sparse, under the induction assumption, we have  $v_{j}=\sum_{i=1}^{J}\mu_{j,i}u_{j,i}$, where $u_{j,i}$ is block $k-$sparse, $\|u_{j,i}\|_{2,1}=\|v_{j}\|_{2,1}=\|v\|_{2,1}$, $\|u_{j,i}\|_{2,\infty}\leq \alpha$ and $0\leq \mu_{j,i} \leq 1$, $\sum_{i=1}^{J}\mu_{j,i}=1$. Hence, $v=\sum_{i=1}^{J}\sum_{j=l}^{s}\lambda_{j}\mu_{j,i}u_{j,i}$, which implies that the statement is true for $s$.

 On the other hand, if $v$ is in the convex hull of $U(\alpha, k,v)$, then
 $v=\sum_{i=1}^{J}\lambda_{i}u_{i}$, $u_{i}\in U(\alpha, k,v)$ and $0\leq \lambda_{i}\leq 1$ , $\sum_{i=1}^{J}\lambda_{i}=1$. It follows immediately that
 \begin{align*}
   \|v\|_{2,1}&=\|\sum_{i=1}^{J}\lambda_{i}u_{i}\|_{2,1}\leq \sum_{i=1}^{J}\lambda_{i}\|u_{i}\|_{2,1}\leq \sum_{i=1}^{J}\lambda_{i}\|u_{i}\|_{2,0}\|u_{i}\|_{2,\infty}\leq k\alpha, \\
   \|v\|_{2,\infty}&=\|\sum_{i=1}^{J}\lambda_{i}u_{i}\|_{2,\infty}\leq \sum_{i=1}^{J}\lambda_{i}\|u_{i}\|_{2,\infty}\leq \alpha,
 \end{align*}
 %where we used the fact $\|\nu\|_{2,1}
 %=\sum_{i}\|\nu[i]\|_{2}\leq $\|\nu\|_{2,0}$\|\nu\|_{2,\infty}$
 %This completes the proof.
 which completes the proof.
\end{proof}

The following lemma is a useful elementary inequality,
which will be used in proving our main results.
\begin{Lemma}[\cite{CZ1}, Lemma 5.3]\label{l2}
Assume $m\geq k$, $a_{1}\geq a_{2}\geq\cdots\geq a_{m}\geq 0$,
$\sum\limits_{i=1}^{k}a_{i}\geq \sum\limits_{i=k+1}^{m}a_{i},$
then for all $\alpha\geq 1$,
$$\sum\limits_{j=k+1}^{m}a_{j}^{\alpha}\leq \sum\limits_{i=1}^{k}a_{i}^{\alpha}.$$
More generally, assume $a_{1}\geq a_{2}\geq\cdots\geq a_{m}\geq 0$, $\lambda\geq 0$
and $\sum\limits_{i=1}^{k}a_{i}+\lambda\geq \sum\limits_{i=k+1}^{m}a_{i},$ then for all $\alpha\geq 1$,
$$\sum\limits_{j=k+1}^{m}a_{j}^{\alpha}\leq k\Big(\sqrt[\alpha]{\frac{\sum_{i=1}^{k}a_{i}^{\alpha}}{k}}
+\frac{\lambda}{k}\Big)^{\alpha}.$$
\end{Lemma}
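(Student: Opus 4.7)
The plan is to establish the first (unweighted) inequality by monotonicity combined with $\alpha-1\ge 0$, and then deduce the second (slackened) form by a uniform shift of the top $k$ entries followed by Minkowski's inequality for the $\ell_\alpha$-norm.

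For the first inequality, I would use $a_k$ as a pivot. The sorting hypothesis gives $a_j\le a_k$ for every $j>k$ and $a_i\ge a_k$ for every $i\le k$. Since $\alpha-1\ge 0$, raising these to the power $\alpha-1$ is order-preserving, so
\[
\sum_{j=k+1}^{m} a_j^{\alpha} \;=\; \sum_{j>k} a_j\cdot a_j^{\alpha-1}\;\le\; a_k^{\alpha-1}\sum_{j>k} a_j\;\le\; a_k^{\alpha-1}\sum_{i\le k} a_i\;\le\; \sum_{i\le k} a_i^{\alpha},
\]
where the middle inequality is the hypothesis $\sum_{i\le k}a_i\ge\sum_{j>k}a_j$ and the final one uses $a_i^{\alpha}=a_i\cdot a_i^{\alpha-1}\ge a_i\cdot a_k^{\alpha-1}$. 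The degenerate case $a_k=0$ forces all $a_j=0$ for $j>k$ and is trivial.

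For the second (general) inequality, I would reduce to the first part. Set $\widetilde a_i=a_i+\lambda/k$ for $i\le k$, keeping $a_j$ unchanged for $j>k$. The shift is uniform, so $\widetilde a_1\ge\cdots\ge\widetilde a_k$, and $\widetilde a_k\ge a_k\ge a_{k+1}$ ensures that the concatenated sequence $(\widetilde a_1,\ldots,\widetilde a_k,a_{k+1},\ldots,a_m)$ is still non-increasing. By construction $\sum_{i\le k}\widetilde a_i=\sum_{i\le k}a_i+\lambda\ge\sum_{j>k}a_j$, so applying the first part to the augmented sequence yields
\[
\sum_{j>k} a_j^{\alpha}\;\le\;\sum_{i\le k}(a_i+\lambda/k)^{\alpha}.
\]
Then Minkowski's inequality (valid because $\alpha\ge 1$) gives
\[
\Bigl(\sum_{i\le k}(a_i+\lambda/k)^{\alpha}\Bigr)^{1/\alpha}\le \Bigl(\sum_{i\le k}a_i^{\alpha}\Bigr)^{1/\alpha}+\Bigl(\sum_{i\le k}(\lambda/k)^{\alpha}\Bigr)^{1/\alpha}=k^{1/\alpha}\Bigl(\sqrt[\alpha]{\tfrac{\sum_{i\le k}a_i^{\alpha}}{k}}+\tfrac{\lambda}{k}\Bigr),
\]
and raising to the $\alpha$-th power produces exactly the claimed bound.

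The main obstacle is conceptual rather than technical: recognizing that the additive slack $\lambda$ should be distributed evenly (i.e., $\lambda/k$ per coordinate) over the first $k$ positions so that the shift simultaneously (i) preserves the sorted order, (ii) inherits the sum-hypothesis required by the unweighted case, and (iii) combines cleanly with Minkowski to reproduce the exact right-hand side $k(A+\lambda/k)^{\alpha}$ with $A=\bigl(\sum_{i\le k}a_i^{\alpha}/k\bigr)^{1/\alpha}$. Once this reduction is identified, the two tools invoked, pivot monotonicity and Minkowski's inequality, are entirely elementary.
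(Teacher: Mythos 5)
Your proof is correct. Note that the paper itself offers no argument for this lemma---it is imported verbatim from \cite{CZ1} (Lemma 5.3)---so there is no in-paper proof to compare against; your write-up supplies a complete, self-contained derivation. Both steps check out: the pivot argument $\sum_{j>k}a_j^{\alpha}\le a_k^{\alpha-1}\sum_{j>k}a_j\le a_k^{\alpha-1}\sum_{i\le k}a_i\le\sum_{i\le k}a_i^{\alpha}$ is sound (with the degenerate case $a_k=0$ disposed of separately, as you do), and your reduction of the general case to the unweighted one---replacing $a_i$ by $a_i+\lambda/k$ for $i\le k$, which preserves both the monotone ordering (since $a_k+\lambda/k\ge a_{k+1}$) and the sum hypothesis---followed by Minkowski's inequality reproduces the right-hand side $k\bigl(\sqrt[\alpha]{\sum_{i\le k}a_i^{\alpha}/k}+\lambda/k\bigr)^{\alpha}$ exactly. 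This route is arguably cleaner than the one in the cited source, which bounds the tail by $a_{k+1}^{\alpha-1}\sum_{j>k}a_j$ and then combines a power-mean bound on $a_{k+1}$ with a H\"older estimate and a Bernoulli-type inequality to reach the same expression; your version obtains the slackened statement as a corollary of the unweighted one rather than reproving it from scratch, at the cost only of invoking Minkowski.
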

From the definition of $h$, $\widehat{x}$ and $x$, we have the following lemma.
\begin{Lemma}\label{l3}
For any $\Gamma\subset\{1, 2, \ldots, M\}$, it holds that
$$\|h[\Gamma^{c}]\|_{2,1}\leq \|h[\Gamma]\|_{2,1}+2\|x[\Gamma^{c}]\|_{2,1}.$$
\end{Lemma}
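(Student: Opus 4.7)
The plan is standard: derive the cone-type inequality from the optimality of $\widehat{x}$ in the mixed $l_{2}/l_{1}$ program \eqref{g11}. Because $y=Ax+z$ with $\|z\|_{2}\le\varepsilon$, the true signal $x$ is feasible for \eqref{g11}, and so the minimizer $\widehat{x}$ satisfies $\|\widehat{x}\|_{2,1}\le\|x\|_{2,1}$. Substituting $\widehat{x}=x+h$ rewrites this as $\|x+h\|_{2,1}\le\|x\|_{2,1}$, which will serve as the starting point.

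Next, I would exploit the additivity of $\|\cdot\|_{2,1}$ over the block index set $\Gamma$ and its complement $\Gamma^{c}$. Writing
\begin{equation*}
\|x+h\|_{2,1}=\sum_{i\in\Gamma}\|x[i]+h[i]\|_{2}+\sum_{i\in\Gamma^{c}}\|x[i]+h[i]\|_{2},
\end{equation*}
I apply the reverse triangle inequality $\|x[i]+h[i]\|_{2}\ge \bigl|\|x[i]\|_{2}-\|h[i]\|_{2}\bigr|$ inside each block, keeping the $x$-term on $\Gamma$ and the $h$-term on $\Gamma^{c}$. This yields
\begin{equation*}
\|x+h\|_{2,1}\ge \|x[\Gamma]\|_{2,1}-\|h[\Gamma]\|_{2,1}+\|h[\Gamma^{c}]\|_{2,1}-\|x[\Gamma^{c}]\|_{2,1}.
\end{equation*}
Combining this lower bound with the upper bound $\|x+h\|_{2,1}\le\|x[\Gamma]\|_{2,1}+\|x[\Gamma^{c}]\|_{2,1}$, cancelling $\|x[\Gamma]\|_{2,1}$ on both sides, and rearranging gives exactly the claimed inequality.

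There is essentially no obstacle: the block structure enters only through the observation that $\|\cdot\|_{2,1}$ decomposes as a sum of block $l_{2}$ norms, and the reverse triangle inequality holds blockwise because each $\|\cdot\|_{2}$ is a norm. Thus the lemma is a direct block-sparse analogue of the classical cone inequality $\|h_{T^{c}}\|_{1}\le\|h_{T}\|_{1}+2\|x_{T^{c}}\|_{1}$ used throughout compressed sensing, and the proof amounts to writing down the optimality inequality and splitting it along $\Gamma$ and $\Gamma^{c}$.
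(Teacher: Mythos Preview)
Your proof is correct and follows essentially the same approach as the paper: use the feasibility of $x$ to obtain $\|\widehat{x}\|_{2,1}\le\|x\|_{2,1}$, split $\|x+h\|_{2,1}$ over $\Gamma$ and $\Gamma^{c}$, apply the reverse triangle inequality blockwise, and rearrange. Your version is slightly more explicit in justifying feasibility of $x$ via $\|z\|_{2}\le\varepsilon$, but the argument is otherwise identical.
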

\begin{proof}
Recall that $h=\widehat{x}-x$. From the minimality of $\widehat{x}$,
it follows that
$\|\widehat{x}\|_{2,1}\leq\|x\|_{2,1}
=\|x[\Gamma]\|_{2,1}+\|x[\Gamma^{c}]\|_{2,1}$.
By the reverse triangle inequalities of $\|\cdot\|_{2,1}$, we obtain
\begin{align*}
  \|\widehat{x}\|_{2,1}&=\|x+h\|_{2,1}
=\|x[\Gamma]+h[\Gamma]\|_{2,1}+\|x[\Gamma^{c}]+h[\Gamma^{c}]\|_{2,1}\\
&\geq \|x[\Gamma]\|_{2,1}-\|h[\Gamma]\|_{2,1}+
\|h[\Gamma^{c}]\|_{2,1}-\|x[\Gamma^{c}]\|_{2,1}.
\end{align*}
The lemma follows from above inequalities immediately.
\end{proof}

\section{Main results}\label{3}
With the preparations given in Section \ref{2}, we establish the main results
in this section$-$a sharp high order block RIP condition for the robust recovery
of arbitrary signals with block pattern. When the signal is block sparse, the sharp condition ensures
 the exact recovery and stable recovery in the noiseless case and  in the noise case, respectively.
First, the following theorem provides a sufficient condition of recovery when $x$ is not block sparse and the observation is noisy.
\begin{Theorem}\label{t1}
Suppose that $x\in \mathbb{R}^{N}$ is an arbitrary vector consistent with \eqref{m1} and $\|z\|_{2}\leq \varepsilon$. If the measurement matrix $A$ satisfies the block RIP with
$\delta_{tk}<\sqrt{\frac{t-1}{t}}$ for $t> 1$,
the solution $\widehat{x}$ to \eqref{g11} obeys
\begin{align}\label{s7}
\|\widehat{x}-x\|_{2}
&\leq \frac{2\sqrt{2t(t-1)(1+\delta_{tk})}}{t(\sqrt{(t-1)/t}-\delta_{tk})}\varepsilon \notag\\
&+\Big(\frac{\sqrt{2}\delta_{tk}+\sqrt{t(\sqrt{(t-1)/t}-\delta_{tk})\delta_{tk}}}{t(\sqrt{(t-1)/t}-\delta_{tk})}+1\Big)
\frac{2\|x[\mathcal{I}_{0}^{c}]\|_{2,1}}{\sqrt{k}}.
 \end{align}

%   \begin{align}\label{s8}
%\|\widehat{f}^{DS}-f\|_{2}
%&\leq \frac{2\sqrt{2t^{2}(t-1)k}}{t(\sqrt{(t-1)/t}-\delta_{tk})}\varepsilon \notag\\
%&+\Big(\frac{\sqrt{2}\delta_{tk}+\sqrt{t(\sqrt{(t-1)/t}-\delta_{tk})\delta_{tk}}}{t(\sqrt{(t-1)/t}-\delta_{tk})}+1\Big)
%\frac{2\|D_{T_{0}^{c}}^{*}f\|_{1}}{\sqrt{k}}.
%\end{align}
\end{Theorem}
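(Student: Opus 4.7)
The plan is to adapt the sharp-RIP scheme of Cai and Zhang \cite{CZ} to the block setting, using Lemma \ref{l1} as the block analogue of their polytope representation lemma. Set $h=\widehat{x}-x$, $T_{0}=\mathcal{I}_{0}$, $h_{0}=h[T_{0}]$ and $h_{1}=h[T_{0}^{c}]$. Since both $x$ and $\widehat{x}$ are feasible for \eqref{g11}, the triangle inequality yields $\|Ah\|_{2}\leq 2\varepsilon$, and Lemma \ref{l3} supplies the block cone estimate $\|h_{1}\|_{2,1}\leq \|h_{0}\|_{2,1}+2\|x[\mathcal{I}_{0}^{c}]\|_{2,1}$. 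Because $h_{0}$ and $h_{1}$ have disjoint block supports, $\|h\|_{2}^{2}=\|h_{0}\|_{2}^{2}+\|h_{1}\|_{2}^{2}$, so it suffices to control $\|h_{0}\|_{2}$ and $\|h_{1}\|_{2}$ by the right-hand side of \eqref{s7}.

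Next I would choose a parameter $\alpha$ of order $\|h_{1}\|_{2,1}/((t-1)k)$ so that $h_{1}\in T(\alpha,(t-1)k)$ (if the $\ell_{2,\infty}$ bound fails for a few oversized blocks, they can be absorbed into $h_{0}$ as in \cite{CZ}). Lemma \ref{l1} then produces a convex representation $h_{1}=\sum_{i}\lambda_{i}u_{i}$ with each $u_{i}$ block $(t-1)k$-sparse, $\mathrm{supp}(u_{i})\subseteq \mathrm{supp}(h_{1})$, $\|u_{i}\|_{2,\infty}\leq \alpha$ and $\|u_{i}\|_{2,1}=\|h_{1}\|_{2,1}$. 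The block supports of $h_{0}$ and $u_{i}$ being disjoint makes each $h_{0}\pm\mu u_{i}$ block $tk$-sparse for every $\mu\geq 0$, so block RIP of order $tk$ applies to all such combinations.

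The heart of the argument is a convex combination of parallelogram identities. For a scalar $\mu>0$ to be optimized later, I would expand a weighted average of $\|A(h_{0}+\mu u_{i})\|_{2}^{2}$ and $\|A(\mu u_{i}-h_{0}/(t-1))\|_{2}^{2}$, with the weights tuned to extract the sharp constant $\sqrt{(t-1)/t}$; apply block RIP to each block $tk$-sparse summand; and collapse the cross terms using $\sum_{i}\lambda_{i}Au_{i}=Ah_{1}$, $Ah=Ah_{0}+Ah_{1}$ and $\|Ah\|_{2}\leq 2\varepsilon$. Under $\delta_{tk}<\sqrt{(t-1)/t}$ the resulting quadratic inequality in $\|h_{0}\|_{2}$ has positive leading coefficient, and solving it produces $\|h_{0}\|_{2}\leq C_{1}\varepsilon+C_{2}\|x[\mathcal{I}_{0}^{c}]\|_{2,1}/\sqrt{k}$ with constants matching those of \eqref{s7}. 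Finally Lemma \ref{l2}, applied to the sorted block $\ell_{2}$ norms of $h_{1}$ with $\lambda=2\|x[\mathcal{I}_{0}^{c}]\|_{2,1}$, yields $\|h_{1}\|_{2}\leq \|h_{0}\|_{2}+2\|x[\mathcal{I}_{0}^{c}]\|_{2,1}/\sqrt{k}$, which accounts for the ``$+1$'' term inside the parentheses of \eqref{s7}; combining the two bounds delivers \eqref{s7}.

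The main obstacle is identifying the precise weighted combination of block-RIP applications together with the optimal $\mu$: that optimization is exactly what produces the sharp constant $\sqrt{(t-1)/t}$ rather than a weaker threshold, and it is the technical core of the Cai-Zhang scheme. Every other ingredient — the cone inequality, the polytope decomposition, and the tail control — transfers from the scalar argument to the block setting with only notational changes, thanks to Lemmas \ref{l1} and \ref{l2}.
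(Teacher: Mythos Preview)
Your outline follows the same Cai--Zhang scheme that the paper uses, but two concrete points would make the argument fail as written.

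First, setting $T_{0}=\mathcal{I}_{0}$ (the $k$ largest blocks of $x$) breaks the final step. Lemma~\ref{l2} requires the sequence $a_{1}\geq a_{2}\geq\cdots$ to be \emph{sorted}, so to conclude $\|h[T_{0}^{c}]\|_{2}\leq\|h[T_{0}]\|_{2}+P$ you need $T_{0}$ to index the $k$ largest blocks of $h$, not of $x$. The paper does exactly this: it first gets the cone inequality on $\mathcal{I}_{0}$ from Lemma~\ref{l3}, then passes to $T_{0}=\{\text{$k$ largest blocks of }h\}$, for which the same cone inequality holds a fortiori (left side shrinks, right side grows) and Lemma~\ref{l2} now applies legitimately. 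With your choice $T_{0}=\mathcal{I}_{0}$ there is no ordering relation between the block norms of $h$ on $T_{0}$ and on $T_{0}^{c}$, and the tail bound you claim is unjustified.

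Second, the ``weighted average of $\|A(h_{0}+\mu u_{i})\|_{2}^{2}$ and $\|A(\mu u_{i}-h_{0}/(t-1))\|_{2}^{2}$'' is not the device that gives the sharp constant. The paper (following \cite{CZ}) uses the single family $\beta_{i}=h[T_{0}]+h[T_{1}]+\mu u_{i}$ together with the convexity identity $\sum_{i}\lambda_{i}\bigl\|A(\sum_{j}\lambda_{j}\beta_{j}-\tfrac{1}{2}\beta_{i})\bigr\|_{2}^{2}=\sum_{i}\tfrac{\lambda_{i}}{4}\|A\beta_{i}\|_{2}^{2}$; the point is that $\sum_{j}\lambda_{j}\beta_{j}-\tfrac{1}{2}\beta_{i}$ equals $(\tfrac{1}{2}-\mu)(h[T_{0}]+h[T_{1}])+\mu h-\tfrac{\mu}{2}u_{i}$, so the full $Ah$ appears and can be controlled by $2\varepsilon$. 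Applying block RIP to both sides and choosing $\mu=\sqrt{t(t-1)}-(t-1)$ is what collapses the inequality to the sharp quadratic in $X=\|h[T_{0}]+h[T_{1}]\|_{2}$. Your parenthetical about absorbing oversized blocks is indeed the $T_{1}/T_{2}$ split the paper makes explicit: $h[T_{0}^{c}]=h[T_{1}]+h[T_{2}]$ with $h[T_{1}]$ carrying the blocks of norm exceeding $r/(t-1)$, and Lemma~\ref{l1} is applied only to $h[T_{2}]$.
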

\begin{proof}
%The following proof makes use of the ideas from that of Theorem 2.1 in \cite{CZ},
%but some appropriate modifications are needed to address the mixed norms.
First of all, suppose that $tk$ is an integer.
Let $\widehat{x}=x+h$, where $\widehat{x}$
is a solution to $l_{2}/l_{1}$ minimization problem and $x$ is the original signal.
From Lemma \ref{l3} and the definition of $\mathcal{I}_{0}$,
%$\mathcal{I}_{0}$ is the block indices set of the largest $k$ blocks of $x$ in $l_{2}$ norm.
$\|h[\mathcal{I}_{0}^{c}]\|_{2,1}\leq \|h[\mathcal{I}_{0}]\|_{2,1}+2\|x[\mathcal{I}_{0}^{c}]\|_{2,1}.$
We assume that $T_{0}$ is the block index set
over the $k$ blocks with largest $l_{2}$ norm
of $h$.
%the blocks of $h$ support on $T_{0}$
%are the $k$ largest ones in $l_{2}$ norm.
Hence,
\begin{align}\label{g3}
  \|h[T_{0}^{c}]\|_{2,1}\leq \|h[T_{0}]\|_{2,1}+2\|x[\mathcal{I}_{0}^{c}]\|_{2,1}.
\end{align}
Denote
 $r=(\|h[T_{0}]\|_{2,1}+2\|x[\mathcal{I}_{0}^{c}]\|_{2,1})/k.$
 Next, we decompose $h[T^{c}_{0}]$ as
$h[T^{c}_{0}]=h[T_{1}]+h[T_{2}]$,
where $h[T_{1}]$ remains the blocks of $h[T^{c}_{0}]$
whose $l_{2}$ norms are greater than $\frac{r}{t-1}$ and $0$ elsewhere,
$h[T_{2}]$ retains the blocks of $h[T^{c}_{0}]$
whose $l_{2}$ norms are not more than $\frac{r}{t-1}$ and $0$ otherwise.
Combining above definitions and \eqref{g3}, we have
 $$\|h[T_{1}]\|_{2,1}\leq \|h[T^{c}_{0}]\|_{2,1}\leq kr.$$
Denote
$$\|h[T_{1}]\|_{2,0}=l.$$
From the definition of $h[T_{1}]$, we get
$kr\geq \|h[T_{1}]\|_{2,1}
=\sum\limits_{i\in \mathrm{supp}[h[T_{1}]]}||h[T_{1}][i]||_{2}
> \frac{lr}{t-1}.$
Namely, $l< k(t-1).$
Thus, it is clear that
\begin{align}\label{g4}
\|h[T_{2}]\|_{2,1}&=\|h[T^{c}_{0}]\|_{2,1}-\|h[T_{1}]\|_{2,1}   \notag\\
&< kr-\frac{lr}{t-1}    \notag\\
&=(k(t-1)-l)\cdot\frac{r}{t-1},    \notag\\
\|h[T_{2}]\|_{2,\infty}&\leq \frac{r}{t-1},
\end{align}
and
$\|h[T_{0}]+h[T_{1}]\|_{2,0}=k+l< tk.$
From the definition of $\delta_{k}$,
\begin{align}\label{g5}
 \langle A(h[T_{0}]+h[T_{1}]), Ah\rangle
  &\leq \|A(h[T_{0}]+h[T_{1}])\|_{2}\|Ah\|_{2} \notag\\
  &\leq \sqrt{1+\delta_{tk}}\|h[T_{0}]+h[T_{1}]\|_{2}\|Ah\|_{2}.
  \end{align}
Due to
\begin{align}\label{g6}
  \|Ah\|_{2}\leq \|A\widehat{x}-Ax\|_{2}\leq \|A\widehat{x}-y\|_{2}+\|y-Ax\|_{2}\leq 2\varepsilon,
\end{align}
so \eqref{g5} becomes
\begin{align}\label{g7}
\langle A(h[T_{0}]+h[T_{1}]), Ah\rangle
\leq \sqrt{1+\delta_{tk}}\|h[T_{0}]+h[T_{1}]\|_{2}\cdot (2\varepsilon).
\end{align}
Using \eqref{g4} and Lemma \ref{l1}, we have
$h[T_{2}]=\sum\limits_{i=1}^{J}\lambda_{i}u_{i},$
where $u_{i}$ is block $(k(t-1)-l)-$sparse,  $\sum\limits_{i=1}^{J}\lambda_{i}=1$ with $0\leq \lambda_{i}\leq 1$, and
$\mathrm{supp}(u_{i})\subseteq \mathrm{supp}(h[T_{2}])$,
$\|u_{i}\|_{2,1}=\|h[T_{2}]\|_{2,1}, \ \|u_{i}\|_{2,\infty}\leq \frac{r}{t-1}.$
Hence,
\begin{align*}
 \|u_{i}\|_{2}&=\|u_{i}\|_{2,2}\leq \sqrt{\|u_{i}\|_{2,0}}\|u_{i}\|_{2,\infty}\\
 &\leq \sqrt{k(t-1)-l}\cdot\frac{r}{t-1}
\leq \sqrt{\frac{k}{t-1}}r,
\end{align*}
where the first inequality follows from the fact that for any block $k-$sparse vector $ \upsilon$, $\|\upsilon\|_{2,2}^{2}=\sum_{i}\|\upsilon[i]\|_{2}^{2}\leq k\|\upsilon\|_{2,\infty}^{2}$.
Let
$X=\|h[T_{0}]+h[T_{1}]\|_{2,2}, \ P=\frac{2\|x[\mathcal{I}^{c}_{0}]\|_{2,1}}{\sqrt{k}}.$
Clearly,
\begin{align}\label{g8}
 \|u_{i}\|_{2}&\leq \sqrt{\frac{k}{t-1}}r \notag\\
  &=\sqrt{\frac{k}{t-1}}
  \cdot\frac{\|h[T_{0}]\|_{2,1}+2\|x[\mathcal{I}_{0}^{c}]\|_{2,1}}{k} \notag\\
  &\leq \frac{\|h[T_{0}]\|_{2,2}}{\sqrt{t-1}}
  +\frac{2\|x[\mathcal{I}_{0}^{c}]\|_{2,1}}{\sqrt{k(t-1)}} \notag\\
  &\leq \frac{\|h[T_{0}]+h[T_{1}]\|_{2,2}}{\sqrt{t-1}}
  +\frac{2\|x[\mathcal{I}_{0}^{c}]\|_{2,1}}{\sqrt{k(t-1)}} \notag\\
  %&=\frac{\|h[T_{0}]+h[T_{1}]\|_{2}}{\sqrt{t-1}}
  %+\frac{2\|x[\mathcal{I}_{0}^{c}]\|_{2,1}}{\sqrt{k(t-1)}} \notag\\
  &=\frac{X+P}{\sqrt{t-1}},
\end{align}
 where the second inequality follows from applying Cauchy-Schwarz to any block $k-$sparse vector $\upsilon$, $\|\upsilon\|_{2,1}=\sum_{i}\|\upsilon[i]\|_{2}\cdot 1\leq \sqrt{k} \|\upsilon\|_{2,2}$.
Take
$\beta_{i}=h[T_{0}]+h[T_{1}]+\mu u_{i},$
where $0\leq \mu \leq 1$.
We observe that
\begin{align}\label{g9}
  \sum\limits_{j=1}^{J}\lambda_{j}\beta_{j}-\frac{1}{2}\beta_{i}
  &=h[T_{0}]+h[T_{1}]+\mu h[T_{2}]-\frac{1}{2}\beta_{i} \notag\\
  &=(\frac{1}{2}-\mu)(h[T_{0}]+h[T_{1}])+\mu h-\frac{\mu}{2} u_{i}.
\end{align}
Moreover, $\beta_{i}$ and  $\sum\limits_{j=1}^{N}\lambda_{j}\beta_{j}-\frac{1}{2}\beta_{i}-\mu h$
are block $tk-$sparse,
because $h[T_{0}]$ is block $k-$sparse, $h[T_{1}]$ is block $l-$sparse,
 and $u_{i}$ is block $k(t-1)-l-$sparse.
Note that the following identity (see (25) in \cite{CZ})
\begin{align}\label{g10}
 \sum\limits_{i=1}^{J}\lambda_{i}
\Big\|A\Big(\sum\limits_{j=1}^{J}\lambda_{j}\beta_{j}
 -\frac{1}{2}\beta_{i}\Big)\Big\|_{2}^{2}
=\sum\limits_{i=1}^{J}\frac{\lambda_{i}}{4}\|A\beta_{i}\|_{2}^{2}.
\end{align}
We first bound the left hand side of \eqref{g10}.
Substituting \eqref{g9} into the left hand side of \eqref{g10}
and combining \eqref{g7} and the definition of block RIP, we have the upper bound
\begin{align*}
\sum\limits_{i=1}^{J}\lambda_{i}
&\Big\|A\Big(\sum\limits_{j=1}^{J}\lambda_{j}\beta_{j}
-\frac{1}{2}\beta_{i}\Big)\Big\|_{2}^{2} \\
         =&\sum\limits_{i=1}^{J}\lambda_{i}
\Big\|A\Big((\frac{1}{2}-\mu)(h[T_{0}]+h[T_{1}])+\mu h-\frac{\mu}{2} u_{i}\Big)\Big\|_{2}^{2} \\
         =&\sum\limits_{i=1}^{J}\lambda_{i}\Big \|A
\Big((\frac{1}{2}-\mu)(h[T_{0}]+h[T_{1}])-\frac{\mu}{2} u_{i}\Big)\Big\|_{2}^{2}  \\
         &+2\left\langle A\left( (\frac{1}{2}-\mu)(h[T_{0}]+h[T_{1}])
-\frac{\mu}{2} h[T_{2}]\right), \mu Ah\right\rangle+\mu^{2}\|Ah\|_{2}^{2} \\
        =&\sum\limits_{i=1}^{J}\lambda_{i}\Big \|A
\Big((\frac{1}{2}-\mu)(h[T_{0}]+h[T_{1}])-\frac{\mu}{2} u_{i}\Big)\Big\|_{2}^{2}  \\
        &+\mu(1-\mu)\langle A(h[T_{0}]+h[T_{1}]), Ah\rangle  \\
\leq&(1+\delta_{tk})\sum\limits_{i=1}^{J}\lambda_{i}
\Big\|(\frac{1}{2}-\mu)(h[T_{0}]+h[T_{1}])-\frac{\mu}{2}u_{i}\Big\|_{2}^{2}  \\
       &+\mu(1-\mu)\sqrt{1+\delta_{tk}}\|h[T_{0}]+h[T_{1}]\|_{2}
\cdot(2\varepsilon) \\
       =&(1+\delta_{tk})\sum\limits_{i=1}^{J}\lambda_{i}
\Big((\frac{1}{2}-\mu)^{2}\|h[T_{0}]+h[T_{1}]\|_{2}^{2}
+\frac{\mu^{2}}{4} \|u_{i}\|_{2}^{2}\Big) \\
       &+\mu(1-\mu)\sqrt{1+\delta_{tk}}\|h[T_{0}]+h[T_{1}]\|_{2}
\cdot(2\varepsilon).
 \end{align*}
On the other hand, using the expression of $\beta_{i}$, the block RIP results in the lower bound
\begin{align*}
 \sum\limits_{i=1}^{J}\frac{\lambda_{i}}{4}\|A\beta_{i}\|_{2}^{2}&= \sum\limits_{i=1}^{J}\frac{\lambda_{i}}{4}\|A\Big(h[T_{0}]+h[T_{1}]+\mu u_{i}\Big)\|_{2}^{2} \\
&\geq \sum\limits_{i=1}^{J}\frac{\lambda_{i}}{4}(1-\delta_{tk})
\|h[T_{0}]+h[T_{1}]+\mu u_{i}\|_{2}^{2} \\
&=(1-\delta_{tk})\sum\limits_{i=1}^{J}\frac{\lambda_{i}}{4}\Big(\|h[T_{0}]+h[T_{1}]\|_{2}^{2}+\mu^{2}\|u_{i}\|_{2}^{2}\Big).
\end{align*}
Combining the above two inequalities, we have
\begin{align*}
  \Big[&(\mu^{2}-\mu)+(\frac{1}{2}-\mu+\mu^{2})\delta_{tk}\Big]\|h[T_{0}]+h[T_{1}]\|_{2}^{2} \\
  &+\mu(1-\mu)\sqrt{1+\delta_{tk}}\|h[T_{0}]+h[T_{1}]\|_{2}\cdot(2\varepsilon)
  +\sum\limits_{i=1}^{J}\lambda_{i}\frac{\delta_{tk}}{2}\mu^{2}\|u_{i}\|_{2}^{2}\geq 0.
\end{align*}
From (\ref{g8}) and the  expression of $X$ with the fact $\|\cdot\|_{2,2}=\|\cdot\|_{2}$ , we obtain
\begin{align*}
 \Big[&(\mu^{2}-\mu)+\Big(\frac{1}{2}-\mu+(1+\frac{1}{2(t-1)})\mu^{2}\Big)\delta_{tk}\Big]X^{2} \\
  &+\Big[2\varepsilon\mu(1-\mu)\sqrt{1+\delta_{tk}}
  +\frac{\mu^{2}\delta_{tk}P}{t-1}\Big]X
  +\frac{\mu^{2}P^{2}\delta_{tk}}{2(t-1)}\geq 0.
 \end{align*}
Taking $\mu=\sqrt{t(t-1)}-(t-1)$, we have
$$\frac{\mu^{2}}{t-1}\Big[-t\Big(\sqrt{\frac{t-1}{t}}-\delta_{tk}\Big)X^{2}
+\Big(2\varepsilon\sqrt{t(t-1)(1+\delta_{tk})}+P\delta_{tk}\Big)X
+\frac{P^{2}\delta_{tk}}{2}\Big]\geq 0.$$
The condition $\delta_{tk}<\sqrt{(t-1)/t}$ ensures that the above inequality is a second-order inequality for $X$ and the quadratic coefficient is negative.
Thus, we obtain
\begin{align*}
 X\leq&\Big\{\Big( 2\varepsilon\sqrt{t(t-1)(1+\delta_{tk})}+P\delta_{tk} \Big)\\
 &+\Big[\Big( 2\varepsilon\sqrt{t(t-1)(1+\delta_{tk})}+P\delta_{tk} \Big)^{2}\\
 &+2t\Big( \sqrt{(t-1)/t}-\delta_{tk} \Big)P^{2}\delta_{tk}  \Big]^{1/2}\Big\}
\cdot \Big(2t(\sqrt{(t-1)/t}-\delta_{tk})\Big)^{-1}\\
\leq&\frac{2\sqrt{t(t-1)(1+\delta_{tk})}}{t(\sqrt{(t-1)/t}-\delta_{tk})}\varepsilon \\
&+\frac{2\delta_{tk}+\sqrt{2t(\sqrt{(t-1)/t}
-\delta_{tk})\delta_{tk}}}{2t(\sqrt{(t-1)/t}-\delta_{tk})}P,
\end{align*}
where the last inequality is a result of the fact that
$(a+b)^{\frac{1}{2}}\leq a^{\frac{1}{2}}+b^{\frac{1}{2}}$ for any nonnegative constants $a$ and $b$.
With \eqref{g3} and the representation of $P$,  it is clear that
$$\|h[T_{0}^{c}]\|_{2,1}\leq \|h[T_{0}]\|_{2,1}+P\sqrt{k}.$$
From Lemma \ref{l2}, we conclude that
$$\|h[T_{0}^{c}]\|_{2,2}\leq \|h[T_{0}]\|_{2,2}+P.$$
Therefore, it is not hard to see that
\begin{align*}
\|h\|_{2}&=\sqrt{\|h[T_{0}]\|^{2}_{2}+\|h[T_{0}^{c}]\|^{2}_{2}}  \\
&\leq\sqrt{\|h[T_{0}]\|^{2}_{2}+(\|h[T_{0}]\|_{2}+P)^{2}} \\
&\leq\sqrt{2}\|h[T_{0}]\|_{2}+P \\
&\leq\sqrt{2}(\|h[T_{0}]+h[T_{1}]\|_{2})+P \\
&=\sqrt{2}X+P \\
&\leq \frac{2\sqrt{2t(t-1)(1+\delta_{tk})}}{t(\sqrt{(t-1)/t}-\delta_{tk})}\varepsilon \\
&+\Big(\frac{\sqrt{2}\delta_{tk}+\sqrt{t(\sqrt{(t-1)/t}-\delta_{tk})\delta_{tk}}}{t(\sqrt{(t-1)/t}-\delta_{tk})}+1\Big)
\frac{2\|x[\mathcal{I}_{0}^{c}]\|_{2,1}}{\sqrt{k}}.
\end{align*}

If $tk$ is not an integer, we denote $t'=\lceil tk\rceil/k$, then $t'k$ is an integer and $t< t'$.
So we have
$\delta_{t'k}=\delta_{tk}<\sqrt{\frac{t-1}{t}}<\sqrt{\frac{t'-1}{t'}}.$
Using the method similar to the proof above,
we can prove the result by working on $\delta_{t'k}$.
Hence, we complete the proof of the theorem.

%Next, we prove (\ref{s8}). The proof of (\ref{s8}) is similar to the proof of (\ref{s7}).
%We only need to replace (\ref{s9}) and (\ref{s2}) with (\ref{s10}) and (\ref{s11}), respectively.

%\begin{align}\label{s10}
 % \|D^{*}A^{*}Ah\|_{\infty} & = \|D^{*}A^{*}A(\widehat{f}-f)\|_{\infty} \notag\\
   %& \leq \|D^{*}A^{*}(A\widehat{f}-y)\|_{\infty}+\|D^{*}A^{*}(y-Af)\|_{\infty} \notag\\
%   &\leq 2\varepsilon,
%\end{align}
%\begin{align}\label{s11}
 % \langle AD(D_{T_{0}}^{*}h+D_{T_{1}}^{*}h), Ah\rangle & =\langle D_{T_{0}}^{*}h+D_{T_{1}}^{*}h, ~D^{*}A^{*}Ah\rangle \notag\\
%   & \leq \|D_{T_{0}}^{*}h+D_{T_{1}}^{*}h\|_{1}\|D^{*}A^{*}Ah\|_{\infty} \notag\\
%   &\leq \sqrt{tk}\|D_{T_{0}}^{*}h+D_{T_{1}}^{*}h\|_{2}\cdot(2\varepsilon).
%\end{align}
%We also can get (\ref{s8}).

%This completes the proof of Theorem \ref{t1}.
\end{proof}

\begin{Remark}
Theorem \ref{t1} indicates that as long as measurement matrix $A$
meets with the block RIP with a suitable constant,
the mixed $l_{2}/l_{1}$ minimization method can robustly recover any signals with block structure from noisy measurements $y=Ax+z$.
Moreover, if $x$ is block $k-$sparse, then Theorem \ref{t1} guarantees perfect and stable recovery of $x$ from its samples $y$ in the noise-free and noisy setting.

\end{Remark}

The following theorem shows the condition $\delta_{tk}<\sqrt{\frac{t-1}{t}}$ with $t\geq4/3$ is sharp for exact and stable recovery in noiseless and noise case, respectively.
\begin{Theorem}
Let $t\geq \frac{4}{3}$.
For any $\varepsilon>0$ and $k\geq \frac{5}{\varepsilon}$,
then there exist a sensing matrix $A\in \mathbb{R}^{n\times N}$ with $\delta_{tk}<\sqrt{\frac{t-1}{t}}+\varepsilon$
 and some block $k-$sparse signal $x_{0}$ such that
 \begin{description}
   \item[{\rm (1)}]In the noiseless case, i.e., $y=Ax_{0}$,
   the mixed $l_{2}/l_{1}$ minimization \eqref{g11} can not exactly recover the block $k-$sparse signal $x_{0}$, i.e., $\widehat{x}\neq x_{0}$,
   where $\widehat{x}$ is the solution to \eqref{g11}.
   \item[{\rm (2)}] In the noise case, i.e., $y=Ax_{0}+z$,
   the mixed $l_{2}/l_{1}$ minimization \eqref{g11} can not stably recover the block $k-$sparse signal $x_{0}$, i.e., $\widehat{x}\nrightarrow x_{0}$ as $z\rightarrow 0$,
   where $\widehat{x}$ is the solution to \eqref{g11}.
 \end{description}
\end{Theorem}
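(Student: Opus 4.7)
The plan is to construct an explicit counterexample. For each $t\geq 4/3$, $\varepsilon>0$, and $k\geq 5/\varepsilon$, I will exhibit a sensing matrix $A$ and two distinct block sparse vectors $x_0,x_1$ with $x_0$ block $k$-sparse such that $A x_0 = A x_1$, $\|x_1\|_{2,1}\leq\|x_0\|_{2,1}$, and $\delta_{tk}(A)<\sqrt{(t-1)/t}+\varepsilon$. Once such a triple is in hand, both parts of the theorem are immediate: in the noiseless case $x_1$ is a feasible point of \eqref{g11} with no larger $\ell_{2,1}$ norm than $x_0$, so $x_0$ is not the unique minimizer and (1) holds; in the noisy case $x_1$ remains feasible for $y=Ax_0+z$ whenever $\|z\|_2$ is small, and continuity of the convex program together with the stability bound of Theorem~\ref{t1} applied with $x_1$ in place of $x_0$ forces $\widehat{x}(z)\to x_1\neq x_0$ as $z\to 0$, yielding (2).

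First, reduce to the case $tk\in\mathbb{N}$ by replacing $t$ with $\lceil tk\rceil/k$; the induced $O(1/k)$ shift in $\sqrt{(t-1)/t}$ is absorbed into the $\varepsilon$ slack via $k\geq 5/\varepsilon$. Also assume each block of $\mathcal{I}$ has length one, so the construction reduces to the standard sparse setting (the general block case follows by embedding). Next choose
\[
x_0 = \sum_{i=1}^{k}\alpha\,e_i,\qquad x_1 = \sum_{i=k+1}^{k+p}\beta\,e_i,
\]
with support size $p\geq (t-1)k$ and amplitudes related by $p\beta=k\alpha$, so that $\|x_0\|_{2,1}=\|x_1\|_{2,1}=k\alpha$; let $h=x_0-x_1$. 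The sensing matrix $A$ is built from the orthogonal projection onto $h^{\perp}$, refined by a small rank-one correction aligned with the large-magnitude coordinates of $h$ and chosen so that $Ah=0$ while the Gram matrix $A^{T}A$ restricted to any block $tk$-sparse coordinate subspace has smallest eigenvalue $1-\sqrt{(t-1)/t}+O(1/k)$. Tuning the parameters $\alpha,\beta,p$ then pushes the worst-case restricted eigenvalue down to $\sqrt{(t-1)/t}+O(1/k)$; the condition $t\geq 4/3$ is precisely what makes the tuning equations solvable in the admissible positive range, and $k\geq 5/\varepsilon$ controls the $O(1/k)$ remainder below $\varepsilon$.

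The main obstacle is calibrating $A$ so that the kernel condition $Ah=0$ and the sharp RIC bound hold simultaneously. A naive projection $A=P_{h^{\perp}}$ misses the sharp constant by a fixed amount (for $t=4/3$ one can check the closed form $\delta_{tk}(P_{h^{\perp}})=\max_{|S|=tk}\|h_S\|_2^2/\|h\|_2^2$ bottoms out at $2/3$ rather than the target $1/2$), so a carefully engineered rank-one perturbation of the projection is indispensable. Identifying the range of $t$ in which this correction remains feasible, i.e.\ keeps all calibration amplitudes positive and yields a valid sensing matrix, is exactly what the hypothesis $t\geq 4/3$ encodes; once the construction is in place, the verification of (1) is purely algebraic and the verification of (2) is a straightforward continuity argument using Theorem~\ref{t1}.
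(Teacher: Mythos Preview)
Your overall shape is right---build $h=x_0-x_1$ with $Ax_0=Ax_1$ and $\|x_1\|_{2,1}\leq\|x_0\|_{2,1}$, then control $\delta_{tk}(A)$---and you correctly diagnose that the bare projection $P_{h^{\perp}}$ misses the sharp constant. But the fix you propose, an unspecified ``rank-one correction,'' is not the mechanism, and as written there is no construction to verify. The actual remedy is a plain scalar rescaling: take $A=\sqrt{1+\delta}\,(I-\gamma\gamma^{T})$ with $\delta=\sqrt{(t-1)/t}$ and $\gamma=h/\|h\|_2$. Then $\|Ax\|_2^2=(1+\delta)(\|x\|_2^2-\langle\gamma,x\rangle^2)$, so the upper RIP deviation is exactly $\delta$, and the lower one is $(1+\delta)C-\delta$ where $C=\max_{|S|\leq tk}\|\gamma_S\|_2^2$. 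Choosing the amplitude ratio so that $C=2\delta/(1+\delta)$ (this is what forces $a'/k=(t-1)+\sqrt{t(t-1)}$, and $t\geq 4/3$ is exactly $a'\geq k$, ensuring the $k$ large blocks are picked first) balances both deviations at $\delta$ up to the $O(1/k)$ rounding error. No additional rank-one term is needed, and until you write down an explicit $A$ you do not have a proof.

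Your argument for (2) has a separate gap: you invoke Theorem~\ref{t1} to conclude $\widehat{x}(z)\to x_1$, but Theorem~\ref{t1} requires $\delta_{tk}<\sqrt{(t-1)/t}$, which your constructed $A$ violates by design, so the theorem is inapplicable. The correct route is elementary: note that $\widehat{x}-x_0+x_1$ is feasible for \eqref{g11} with the same residual as $\widehat{x}$ (since $A(x_0-x_1)=0$), so minimality gives $\|\widehat{x}\|_{2,1}\leq\|\widehat{x}-x_0+x_1\|_{2,1}$; if $\widehat{x}\to x_0$ as $z\to 0$ this yields $\|x_0\|_{2,1}\leq\|x_1\|_{2,1}$, contradicting a \emph{strict} inequality $\|x_1\|_{2,1}<\|x_0\|_{2,1}$. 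Your normalization $p\beta=k\alpha$ gives only equality; you need to break it (in the paper this comes for free from the integer rounding $a<a'$).
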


\begin{proof}
%The proof is similar to the procedure of proof of Theorem 2.2 in \cite{CZ}.
%The difference is that we deal with the block sparse case.
For all $\varepsilon>0$ and $k\geq \frac{5}{\varepsilon}$, let
$a'=((t-1)+\sqrt{t(t-1)})k$ and $N\geq (k+a')d$. Since $t\geq \frac{4}{3}$,
 we have $a'\geq k$. Suppose that $a$ is the largest integer strictly smaller than $a'$, then $a<a'$ and $a'-a\leq 1$.
 Denote
 $$\gamma=\frac{1}{\sqrt{kd+\frac{ak^{2}}{a'^{2}}d}}
 (\underbrace{\overbrace{1, \ldots, 1}^{d},
 \ldots,\overbrace{1, \ldots, 1}^{d}}_{k~ blocks},
 \underbrace{\overbrace{-\frac{k}{a'},\ldots,-\frac{k}{a'}}^{d},
 \ldots,\overbrace{-\frac{k}{a'},\ldots,-\frac{k}{a'}}^{d}}_{a~ blocks},
 \textbf{0},\ldots,\textbf{0})\in \mathbb{R}^{N}
 ,$$ where $k+a\leq M$, $\mathcal{I}=\{d_{1}=d,\ldots,d_{k+a}=d, d_{k+a+1},\ldots,d_{M}\}$ and $\|\gamma\|_{2}=1$.
Define the linear map $A: \mathbb{R}^{N}\rightarrow\mathbb{R}^{N}$ by
\begin{align*}
 Ax&=\sqrt{1+\sqrt{\frac{t-1}{t}}}(x-\langle \gamma, x\rangle \gamma),
\end{align*}
for all $x\in\mathbb{R}^{N}.$
Then for any block $\lceil tk\rceil-$sparse signal $x$, we have
$$\|Ax\|_{2}^{2} =\left(1+\sqrt{\frac{t-1}{t}}\right)
\left(\|x\|_{2}^{2}-|\langle \gamma, x\rangle|^{2}\right).$$
From the Cauchy-Schwarz inequality and $a'\geq k$, $a'-a\leq 1$
as well as the fact
$\frac{a'^{2}+k^{2}(t-1)}{a'^{2}+a'k}
 =2\sqrt{t-1}(\sqrt{t}-\sqrt{t-1})$ (see the proof of Theorem 2.2, \cite{CZ}),
therefore,
\begin{align*}
 0\leq&|\langle \gamma, x\rangle|^{2}
 =|\langle \gamma[\mathrm{supp}[x]], x\rangle|^{2}
 \leq\|x\|_{2}^{2}\cdot \|\gamma[\max{(\lceil tk\rceil)}]\|_{2}^{2}  \\
\leq&\|x\|_{2}^{2}\cdot\frac{a'^{2}+k(\lceil tk\rceil-k)}{a'^{2}+ak}
\leq\frac{a'^{2}+k^{2}(t-1)+k}{a'^{2}+ak}\|x\|_{2}^{2}\\
=&\frac{a'^{2}+k^{2}(t-1)+k}{a'^{2}+a'k}
\cdot\frac{a'^{2}+a'k}{a'^{2}+ak}\|x\|_{2}^{2}\\
=&\frac{a'^{2}+k^{2}(t-1)+k}{a'^{2}+a'k}
\cdot\frac{1}{1-\frac{k(a'-a)}{a'^{2}+a'k}}\|x\|_{2}^{2}\\
=&\frac{a'^{2}+k^{2}(t-1)}{a'^{2}+a'k}
\cdot\frac{a'^{2}+k^{2}(t-1)+k}{a'^{2}+k^{2}(t-1)}
\cdot\frac{1}{1-\frac{k(a'-a)}{a'^{2}+a'k}}\|x\|_{2}^{2}\\
\leq&2\sqrt{t-1}(\sqrt{t}-\sqrt{t-1})
\cdot(1+\frac{1}{tk})\cdot\frac{1}{1-\frac{1}{2k}}\|x\|_{2}^{2}\\
\leq&2\left(\sqrt{t(t-1)}-(t-1)\right)
\cdot(1+\frac{5}{2k})\|x\|_{2}^{2}\\
\leq&\left(2\sqrt{t(t-1)}-2(t-1)+\frac{5}{2k}\right)\|x\|_{2}^{2},
\end{align*}
where $\gamma[\max{(\lceil tk\rceil)}]$ denotes a vector
remaining the $\lceil tk\rceil$ blocks with largest $l_{2}$
norm of $\gamma$ and zero elsewhere
and $\|\gamma[\max{(\lceil tk\rceil)}]\|_{2}^{2}
\leq \frac{a'^{2}+k(\lceil tk\rceil-k)}{a'^{2}+ak}$.
The last inequality follows that
$2\sqrt{t(t-1)}-2(t-1)\leq 1$.
Consequently,
\begin{align*}
      \left(1+\sqrt{\frac{t-1}{t}}+\varepsilon\right)\|x\|_{2}^{2}
      \geq&\left(1+\sqrt{\frac{t-1}{t}}\right)\|x\|_{2}^{2}\geq\|Ax\|_{2}^{2}\\
      \geq&\left(1+\sqrt{\frac{t-1}{t}}\right)
      \left(1-2\sqrt{t(t-1)}+2(t-1)-\frac{5}{2k}\right)\|x\|_{2}^{2}\\
      =&\Bigg[\left(1+\sqrt{\frac{t-1}{t}}\right)
      \left(1-2\sqrt{t(t-1)}+2(t-1)\right)\\
      &-\left(1+\sqrt{\frac{t-1}{t}}\right)\frac{5}{2k} \Bigg]\|x\|_{2}^{2}\\
      =&\left[1-\sqrt{\frac{t-1}{t}}-
      \left(1+\sqrt{\frac{t-1}{t}}\right)\frac{5}{2k} \right]\|x\|_{2}^{2}\\
      \geq&\left(1-\sqrt{\frac{t-1}{t}}-\varepsilon\right)\|x\|_{2}^{2},
     \end{align*}
where the last inequality follows from that $1+\sqrt{\frac{t-1}{t}}\leq 2$ and $k\geq \frac{5}{\varepsilon}$.
It follows immediately that $\delta_{tk}\leq \sqrt{\frac{t-1}{t}}+\varepsilon$.
Next, we define
\begin{align*}
x_{0}&=
 (\underbrace{\overbrace{1, \ldots, 1}^{d},
 \ldots,\overbrace{1, \ldots, 1}^{d}}_{k~ blocks},
 \underbrace{\overbrace{0,\ldots,0}^{d},
 \ldots,\overbrace{0,\ldots,0}^{d}}_{a~ blocks},
 \textbf{0},\ldots,\textbf{0})\in \mathbb{R}^{N},  \\
\gamma_{0}&=
 (\underbrace{\overbrace{0, \ldots, 0}^{d},
 \ldots,\overbrace{0, \ldots, 0}^{d}}_{k~ blocks},
 \underbrace{\overbrace{\frac{k}{a'},\ldots,\frac{k}{a'}}^{d},
 \ldots,\overbrace{\frac{k}{a'},\ldots,\frac{k}{a'}}^{d}}_{a~ blocks},
 \textbf{0},\ldots,\textbf{0})\in \mathbb{R}^{N}.
 \end{align*}
Note that $\|x_{0}\|_{2,1}=k\sqrt{d}$, $\|\gamma_{0}\|_{2,1}\leq \frac{a}{a'}\cdot k\sqrt{d}<k\sqrt{d}$ and
 $x_{0}$ is block $k-$sparse, $\gamma=\frac{1}{\sqrt{kd+\frac{ak^{2}}{a'^{2}}d}}(x_{0}-\gamma_{0})$.
Since $A\gamma=0$, we have $Ax_{0}=A\gamma_{0}$.

Thus, in the noiseless case $y=Ax_{0}$,
suppose that the mixed $l_{2}/l_{1}$ minimization method \eqref{g11} can exactly recover $x_{0}$,
 i.e., $\widehat{x}=x_{0}$. From the definition of $\widehat{x}$ and $y=A\gamma_{0}$,
 it contradicts that $\|\gamma_{0}\|_{2,1}< \|x_{0}\|_{2,1}$.

 In the noise case $y=Ax_{0}+z$, suppose that the mixed $l_{2}/l_{1}$ minimization method \eqref{g11} can stably recover $x_{0}$,
 i.e., $\lim\limits_{z\rightarrow0}\widehat{x}=x_{0}$. We observe that $y-A(\widehat{x}-x_{0}+\gamma_{0})=y-A\widehat{x}$,
thus $\|\widehat{x}\|_{2,1}\leq \|\widehat{x}-x_{0}+\gamma_{0}\|_{2,1}$. So $\|x_{0}\|_{2,1}\leq \|\gamma_{0}\|_{2,1}$ as $z\rightarrow 0$.
It contradicts that $\|\gamma_{0}\|_{2,1}< \|x_{0}\|_{2,1}$.

Therefore, the mixed $l_{2}/l_{1}$ minimization method \eqref{g11} fails to  exactly and stably recover $x_{0}$ based on $y$ and $A$.
\end{proof}

\section{Conclusion}\label{4}
In this paper, we consider the problem of recovering for an unknown
signal with additional structure-its entries are not
dispersing over all the signal vector but arising in clusters (or blocks)-from a given set of noisy linear measurements.
Based on block RIP, we mainly investigate the recovery guarantee
for the mixed $l_{2}/l_{1}$ minimization method. By extending the technique
 of sparse representation of a polytope to the block setting, we
 establish a high order block RIP condition for robust recovery of signals
 with block pattern by mixed $l_{2}/l_{1}$ minimization in the presence of noise. We also proved under the same condition, the block sparse signals
 can be exactly and stably recovered in the noiseless and noise case, respectively.
 Furthermore, in order to clarify its optimality, we also give a specific
 measurement matrix and block sparse signal such that the given concrete signal can not be exactly and stably recovered from its samples via  mixed $l_{2}/l_{1}$ minimization, when the upper bound of $\delta_{tk}$ increases any $\varepsilon$.
Also, if $d_{i}=1$ for $i\in\{1,\ldots,M\}$, our results return to
those of Cai and Zhang \cite{CZ}.
\section*{Acknowledgments}
This work was supported by the NSF of China (Nos.11271050, 11371183).

\end{document}